\newcommand{\mysubsubsection}[1]{\subsubsection*{\textbf{#1}}}
\newcommand{\REM}[1]{}
\newcommand{\trans}[1]{\,{\stackrel{{#1}}{\rightarrow}}\,}
\newcommand{\transs}[1]{\,{\stackrel{{#1}}{\twoheadrightarrow}}\,}
\newtheoremstyle{mytheoremstyle} 
    {\topsep}                    
    {\topsep}                    
    {}                   
    {}                           
    {\bfseries \fontsize{10}{15}}                   
    {.}                          
    {.5em}                       
    {}  
\theoremstyle{mytheoremstyle}
\newtheorem{theorem}{Theorem}
\newtheorem{proposition}{Proposition}
\newtheorem{definition}{Definition}
\newtheorem{remark}{Remark}
\newtheorem{notation}[theorem]{Notation}
\newtheorem{example}{Example}
\newtheorem{observation}{Observation}
\newenvironment{keywords}{
      \list{}{\advance\topsep by0.35cm\relax\small
       \leftmargin=1cm
       \labelwidth=0.35cm
       \listparindent=0.35cm
      \itemindent\listparindent
      \rightmargin\leftmargin}\item[\hskip\labelsep
                                     \bfseries Keywords:]}
     {\endlist}
\renewenvironment{proof}
	{\begin{trivlist}\item[\hskip\labelsep{\em Proof.}]}
	{\leavevmode\unskip\nobreak\quad\hspace*{\fill}{\ensuremath{{\square}}}\end{trivlist}}
\date{}
\begin{document}


\title{\bfseries\fontsize{12}{15}\selectfont  A Process Algebraic Form to Represent Extensive Games \footnote{published in control and cybernetics journal, no. 1, vol. 44, 2015.}}

\author{ {\bfseries \fontsize{10}{15}\selectfont  Omid Gheibi}\\
   {\fontsize{10}{15}\selectfont  (CE Department, Sharif University of Technology, Iran}\\
   {\fontsize{10}{15}\selectfont  gheibi@ce.sharif.edu)}\\
\smallskip\smallskip
{\bfseries  \fontsize{10}{15}\selectfont  Rasoul Ramezanian}\\
   {\fontsize{10}{15}\selectfont  (Complex and Multi Agent System Lab} \\ 
   {\fontsize{10}{15}\selectfont  Mathematical Sciences Department, Sharif University of Technology, Iran}\\
{\fontsize{10}{15}\selectfont     ramezanian@sharif.edu)}
}

\maketitle


\begin{abstract} 
\noindent In this paper, we introduce an agent-based representation
of games, in order to propose
a compact representation for multi-party games in game theory.  Our method is  inspired by concepts in process theory and process algebra. 
In addition, we introduce an algorithm whose input is a game in the form of process algebra (proposed in this paper) and as an output, finds the Nash equilibrium of the game in linear space complexity.

\end{abstract}

\begin{keywords} extensive games, Nash equilibrium, process theory, process algebra
\end{keywords}


\section{Introduction}
Extensive representation form of a game is a directed graph whose nodes are players, and edges are actions. Assuming that the game has $n$ agents, and each agent has two actions available, the game can be represented by a graph of size $O(2^n)$. 

However, by explaining  the behavior of each agent individually using  an adequate process (called \emph{process-game}), and  obtaining the whole  game through \emph{parallel composition}  these \emph{process-game}s, it is possible to represent the same game in
$O(n)$  space. We take advantage of process algebra to define process-game and the appropriate notion of parallel composition for them.

The word ``process'' refers to the behavior of a system which is a set of actions that are performed in the system and the order of their executions. 
The process theory makes it possible to model the behavior of a system with enormous complexity through modeling the behavior of its components~\cite{MM05}.
By taking advantages of process algebra--automating calculations and running algorithms using parallel computing techniques--we can code the process theory terms and definitions~\cite{W07},~\cite{TADJ}.
On the other hand, a game is a system and its behavior is established by the behavior of all of its players (components). 
Hence, the game could be studied and formally modeled as interactive process~\cite{BENT}. 

Moving along this path in formal methods, in order to reduce the representation of  games with lots of players, we modify the process theory in an appropriate manner to provide a model called ``\textit{process-game}" that encompasses both process theory and game theory notions.  
This proposed process algebraic model  makes it possible to have a compact representation for extensive games--specially in social extensive games which have local interaction--via appropriate parallel composition (section 3).

Also, there are other efforts to reduce the representation of games with very high number of players. 
In comparison with graph-based representation that is proposed to reach the same goal~\cite{KLS}, our  proposed model, facilitates the reduction of  representation of games in the formal method. 

Eventually, to manipulate process-game model efficiently, we propose an algorithm to find the equilibrium path of  games in linear space complexity by using a revision of depth first search and backward induction (section 4).


\section{Preliminaries}
In this section, we review some preliminary concepts in game and process theory.

\subsection{Game-Theoretic Concepts}\label{gameConcepts}
 \noindent In this part we briefly review definitions and concepts of strategic and extensive games with perfect information which appeared in the literature using the same notations as in~\cite{MR94}, Page 89. 
 
A strategic game is a model of decision-making such that decision-makers choose their plan simultaneously  from their possible actions, once and for all.

\begin{definition}[Strategic Game] A strategic game consists of:
\begin{itemize}
\item A finite set of players $N$
\item for each player $i\in N$ a nonempty set of actions $A_i$
\item for each player $i\in N$ a payoff function $\Pi_i: A_1\times \cdots \times A_n \rightarrow \mathcal{R}$.
\end{itemize}
\end{definition}

\noindent 
If for each player $i$'s action set $A_i$ is finite, the the game is finite.

\begin{example}[prisoner's dilemma]\label{ex:dilemma}
Two members of a criminal gang are arrested and imprisoned. 
Each prisoner is in solitary confinement with no means of speaking to or exchanging messages with the other. 
The police admit they don't have enough evidence to convict the pair on the principal charge. 
They plan to sentence both to a year in prison on a lesser charge. 
Simultaneously, the police offer each prisoner a Faustian bargain. 
Each prisoner is given the opportunity either to betray ($B$) the other, by testifying that the other committed the crime, or to cooperate ($C$) with the other by remaining silent. 
Here's how it goes:
\begin{itemize}
\item If A and B both betray the other, each of them serves 2 years in prison
\item If A betrays B but B remains silent, A will be set free and B will serve 3 years in prison (and vice versa)
\item If A and B both remain silent, both of them will only serve 1 year in prison (on the lesser charge)
\end{itemize}
The above situation is shown as a strategic game in Figure~\ref{fig:dilemma}.

\begin{figure}[h]
\begin{center}
\begin{tabular}{ r|c|c| }
\multicolumn{1}{r}{}
 &  \multicolumn{1}{c}{B}
 & \multicolumn{1}{c}{C} \\
\cline{2-3}
B~ &  \hspace*{0.25cm} (-2,-2) \hspace*{0.25cm} & \hspace*{0.25cm} (0,-3) \hspace*{0.25cm} \\
\cline{2-3}
C~ & (-3,0) & (-1,-1) \\
\cline{2-3}
\end{tabular}
\end{center}
\caption{\small strategic representation of Example~\ref{ex:dilemma}\label{fig:dilemma}}
\end{figure}

\noindent
The set of players is $N=\{1,2\}$. 
Possible actions for each player come from the set of $A_1 = A_2 = \{B,C\}$. 
Each cell in the above table corresponds to a  \emph{strategy profile} and shows the resulting payoffs. 
A strategy profile is a set of strategies for all players which fully specifies all actions in a game. 
A strategy profile must include one and only one strategy for every player.
In the cell, the first value is the player 1's payoff and the player 2's payoff is the second one.
\end{example}

One of the most common solution concepts in game theory is Nash equilibrium.
This notion captures a steady state which no player wants to deviate from the current state if action of the other players are fixed (therefore, all player choose their action in a rational manner).

\begin{notation}
For each strategy profile $s$,  $s_{-i}$ shows the action of all players, except player $i$.
\end{notation}

\begin{definition}[Nash Equilibrium] A Nash equilibrium of a strategic game $\Gamma = \left< N,(A_i), (\Pi_i)\right>$ is a strategy profile $s^*$ with the property that for every player $i \in N$ we have
\[
\Pi_i(s^*_{-i},s^*_i) \geq \Pi_i(s^*_{-i},s_i), \forall s_i \in A_i.
\]
\end{definition}

\begin{example} In Example~\ref{ex:dilemma}, strategy profile $(B,B)$ is a Nash equilibrium. 
Because, If player 1 deviates from action $B$ to $C$, his payoff  decreases from $-2$ to $-3$ in strategy profile $(C,B)$. Therefore, he has not any motivation to deviate from the state $(B,B)$. 
Also, because of symmetry, we can say the same reason for player 2 to have not any motivation to deviate from the current state. 
Therefore, strategy profile $(B,B)$ is a Nash equilibrium in this game.
\end{example}

Another form of games, which is the pivot of discussion in this paper, is extensive game. In this form, decision-makers act sequentially (unlike strategic game).

\begin{definition}[Extensive Game with Perfect Information] \label{def: EGwPI} An extensive game with perfect information is defined by a four-tuple $\left<N,H,P,(\Pi_i)\right>$ which has the following properties:
\begin{itemize}
  \item A set $N$ of \emph{players}
  \item A set $H$ of sequences (finite or infinite) that satisfies the following three properties:
  \begin{itemize}
    \item The empty sequence $\emptyset$ (the \emph{empty history} representing the start of the game) is a member of $H$.
    \item If $(a^{k})_{k=1,\ldots,K}\in H$ (where K may be infinite) and positive integer $L<K$ then $(a^{k})_{k=1,\ldots,L}\in H$.
    \item If an infinite sequence $(a^{k})_{k=1,\ldots,\infty}$ satisfies $(a^{k})_{k=1,\ldots,L}\in H$ for every positive integer $L$ then $(a^{k})_{k=1,\ldots,\infty}\in H$.
  \end{itemize}
  Each member of H is a \emph{history} and each term of a history is an \emph{action} which is taken by a player. A history $(a^{k})_{k=1,\ldots,K}\in H$ is \emph{terminal} if it is infinite or there is no $a^{K+1}$ such that $(a^{k})_{k=1,\ldots,K+1}\in H$. The set of all terminal histories is denoted by $Z$.
  \item A function $P$ (the \emph{player function}) that assigns to each nonterminal history (each member of $H\backslash Z$) a member of $N$, $P(h)$ returns the player who takes an action after the history $h$ ($P:H\backslash Z\rightarrow N$).
  \item A function $\Pi$ (the \emph{payoff function}) that assigns to each terminal history (each member of $Z$) a member of $\mathbb{R}^{|N|}$($\Pi:Z\rightarrow \mathbb{R}^{|N|}$, $\Pi_{i}(z)$ is player $i$'s payoff in terminal history $z \in Z$).
\end{itemize}
\end{definition}

An extensive game with perfect information is \emph{finite} if and only if the set $H$ of possible histories is finite.
Throughout this paper, whenever we use the term extensive games, we mean extensive games with perfect information.
In an extensive game, $P(h)$ chooses an action after any nonterminal history $h$ from the set $A(h)=\{a:(h,a) \in H\}$ where $(h,a)$ means a history $h$ followed by an action $a$ which is one of the actions available to the player who moves after $h$.

\begin{definition}[Strategy]\label{strategy} A strategy of player $i \in N$ in an extensive game $\left<N,H,P,(\Pi_i)\right>$ is a function that assigns an action from $A(h)$ to each $h \in H \backslash Z$ (nonterminal history) for which $P(h)=i$.
\end{definition}

The outcome of a strategy profile $s$ is a terminal history which is constructed by $s$, and is denoted by $O(s)$.

\begin{example}\label{huswif} Two people (``husband" and ``wife") are buying items for a dinner party. The
husband buys either fish (F) or meat (M) for the meal; the wife
buys either red wine (R) or white wine (W). Both people prefer red
wine with meat and white wine with fish, rather than either of the
opposite combinations. However, the husband prefers meat over
fish, while the wife prefers fish over meat. 
Assume that the husband buys the meal and tells his wife what was
bought; his wife then buys some wine.
If we want to consider this problem as an extensive game with perfect information we can determine its component like
\begin{itemize}
\item	$N=\{\mathit{Wife},\mathit{Husband}\}$
\item	Possible actions for husband are a member of the set $A_{\mathit{Husband}} = \{F,M\}$ and wife's actions come from $A_{\mathit{Wife}}=\{R,W\}$.
So in this example, $Z$ is a set of sequences which are started by the action $F$ or $M$ and terminated by $R$ or $W$. 
All possible histories $H$ and terminal histories $Z$ are shown below:
\[
H=\{(\emptyset), (F), (M),  (F,R), (F,W), (M,R), (M,W)\}
\]
\[
Z=\{(F,R), (F,W), (M,R), (M,W)\}
\]
\item	For each $h\in H\backslash Z$ , $P(h)$ is as follows:
\[
P((\emptyset)) = \mathit{Husband}, P((F)) = \mathit(Wife),P((M)) = \mathit(Wife)
\]
\item We can represent
the preferences as utility-based payoffs:

$$\Pi_\mathit{Husband}(M,R) = 2,~ \Pi_\mathit{Husband}(F,W) = 1,$$
$$ \Pi_\mathit{Husband}(F,R) = \Pi_\mathit{Husband}(M,W) = 0$$
$$\Pi_\mathit{Wife}(M,R) = 1,~ \Pi_\mathit{Wife}(F,W) = 2,~ \Pi_\mathit{Wife}(F,R) = \Pi_\mathit{Wife}(M,W) = 0$$
\end{itemize}
\end{example}

Nash equilibrium is a common solution concept for extensive games. 
This concept is defined in the following definition.

\begin{definition} The strategy profile $s^{\ast}$ in an extensive game with perfect information is a Nash equilibrium if for each player $i \in N$ we have:
\begin{center}
    $\Pi_{i}(O(s^{\ast}))\geq \Pi_{i}(O(s_{i},s^{\ast}_{-i}))$ for every strategy $s_{i}$ of player $i$.
\end{center}
\end{definition}

The notion of a Nash equilibrium ignores the sequential structure of an extensive form game
and treats strategies as if they were choices made once and for all.
To refine the Nash equilibrium definition is defined a new notion of subgame perfect equilibrium  (see ~\cite{NarY014}, Chapter 3).

\begin{definition} Let $\Gamma=\left<N,H,P, (\Pi_i)\right>$ be an extensive game with perfect information. The subgame of $\Gamma$ that follows the history $h \in H\backslash Z$ (a nonterminal history) is the extensive game $\Gamma(h)=\left<N,H|_{h},P|_{h},\Pi|_{h}\right>$ where all sequences $h'$ of actions for which $(h,h')\in H$ are in the set $H|_{h}$ and vice versa, $P|_{h}$ is the same as function $P$ but its domain come from the set $H|_{h}$, and for each $h',h'' \in Z|_{h}$ (the set $Z|_{h}$ consists of all the terminal histories in the set $H|_{h}$) is defined that $\Pi_{i}|_{h}(h')\geqslant\Pi_{i}|_{h}(h'')$ if and only if $\Pi_{i}(h,h')\geqslant\Pi_{i}(h,h'')$ (note that $(h,h'),(h,h'')\in Z$).

\end{definition}

In a moment, we can define the notion of subgame perfect equilibrium as mentioned before, using the notion of subgame.

\begin{definition}
The strategy profile $s^{\ast}$ in an extensive game with perfect information is a subgame perfect equilibrium if for every player $i\in N$ and every nonterminal history $h\in H\backslash Z$ for which $P(h)=i$ we have:
\begin{center}
    $\Pi_{i}|_{h}(O_{h}(s^{\ast}|_{h}))\geq \Pi_{i}|_{h}(O_{h}(s_{i},s^{\ast}_{-i}|_{h}))$ for every strategy $s_{i}$ of player $i$ \\ in the subgame $\Gamma(h)$.
\end{center}
\end{definition}

Most often, an extensive game is shown by a tree and marked the subgame perfect Nash equilibria (SPNE) of the game on the tree as in~\cite{NarY014}, Chapter 3 (for a better understanding see Example~\ref{huswif} and Figure~\ref{HW}).

\begin{figure}[ht]
\centerline{\includegraphics[width=4cm]{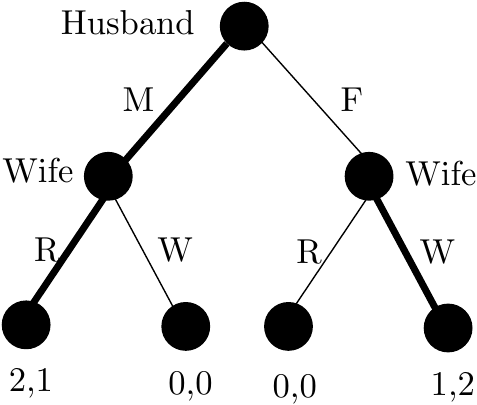}}
\caption{\label{HW}\small game of Example~\ref{huswif} which is depicted as a tree and shown the SPNEs by fat links}
\end{figure}

The definition of  extensive game with perfect information (Definition~\ref{def: EGwPI}) can be generalized by allowing simultaneous moves of the players.
This type of games is called \emph{extensive game with imperfect information}, in this paper.
An extensive game with imperfect information is determined by a quintuple $\left<N,H,P, (\mathcal{I}_i),(\Pi_i)\right>$.
Relative to the definition of an extensive game with perfect information, the new element is the collection $(\mathcal{I}_i)_{i\in N}$ of information partitions.
For each player $i  \in N$,  $\mathcal{I}_i$  is a partition   of $\{h \in H : P(h) = i\}$ with the property that $A(h) = A(h')$ whenever $h$ and $h'$  are in the same member of the partition. 
For $I_i \in \mathcal{I}_i$, we denote by $A(I_i)$ the set  $A(h)$ and by $P(I_i)$, the player $P(h)$ for 
any $h \in I_i$ ($\mathcal{I}_i$ is the \emph{information partition} of player $i$; a set $I_i \in \mathcal{I}_i$ is an \emph{information set} of player $i$)~\cite{MR94}.

We define a variant of the battle of sexes (BoS) game (Example~\ref{huswif}) as an example of extensive games with imperfect information.
\begin{example}\label{varBoS} Assume in Example~\ref{huswif}, that wife decides to hold a dinner party or not. 
If she decides not to hold the dinner party, the game ends and nothing happens.
On the other hand, there are games similar to Example~\ref{huswif} where players move simultaneously instead of moving sequentially. 
As shown in Figure~\ref{SHW}, simultaneous moving is specified by dashed line and means the wife does not know 
that she is in which history.

\begin{figure}[ht]
\centerline{\includegraphics[width=4cm]{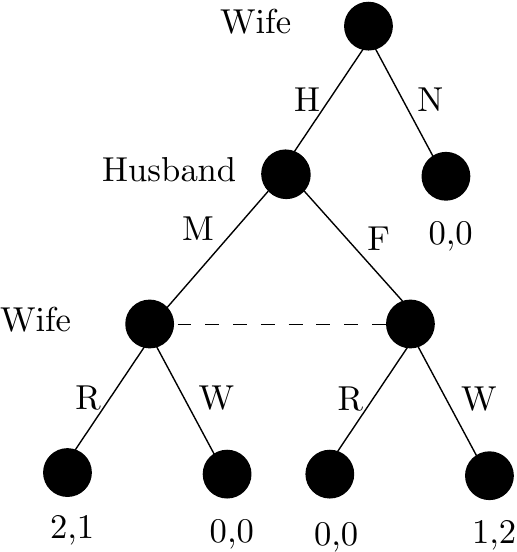}}
\caption{\label{SHW}\small game of the Example~\ref{varBoS} which is depicted as a tree and dashed line means wife and husband move simultaneously on that level (as an example of extensive game with imperfect information).}
\end{figure}
\end{example}
Formally, we have $P(\emptyset) = P(H, M) = P(H, F) = \text{Wife}$, $P(H) = \text{Husband}$, $\mathcal{I}_{\text{Wife}}=\{\{\emptyset\},\{(H, M),  (H, F)\}\}$, and $\mathcal{I}_{Husband} = \{\{H\}\}$.

\begin{definition}
A pure strategy of player $i \in N$  in an extensive game with imperfect information $\left<N,H,P, (\mathcal{I}_i),(\Pi_i)\right>$ is a function that assigns an action in $A(I_i)$ to each information set $I_i \in \mathcal{I}_i$.
\end{definition}


\subsection{Process Theory}\label{ts}

The notion of a transition system can be considered to be
the fundamental notion for the description of process
behavior~\cite{MM05}. In this section we state some abstract
formal definitions  regarding transition systems and specify the notion
of authentication using
 these definitions.

\begin{definition} A transition system $T$ is a quintuple
 $(S,A,\rightarrow,\downarrow,s_0)$ where
 \begin{itemize}
 \item $S$ is a set of states,

 \item $A$ is a set of actions containing an internal action $\tau$,

 \item $\rightarrow\subseteq S\times  A \times S$ is a set of
 transitions,

 \item $\downarrow\subseteq S$ is a set of successfully
 terminating states,

 \item $s_0\in S$ is the initial state.
 \end{itemize}
\end{definition}

 The set $\twoheadrightarrow\subseteq
S\times A^\star\times S$ shows \emph{generalized transitions} of
$T$ ($A^\star$ is a set of all possible chains of actions from $A$). A state $s\in S$ is called \emph{reachable} state of $T$ if
there is $\sigma\in A^\star$ such that $s_0\transs{\sigma}s$. The
set of all reachable states of a transition system $T$ is denoted
by $reach(T)$. We define $act(T)=\{a\in A\mid \exists s,s'\in reach(T)~(s,a,s')\in \rightarrow\}$. 
In the sequel, we assume that
every transition system $T$ is connected, i.e., $reach(T)=S$, and
$act(T)=A$. If $S$ and $A$ are finite, $T$ is called a finite
transition system.

\begin{notation} We refer to $(s,a,s')\in \rightarrow$ by
$s\trans{a}s'$.
\end{notation}

We define $Trace=\{\sigma\in A^\star\mid\exists s'\in S~
s_0\transs{\sigma}s'\}$. If there exists $n\in \mathbb{N}$ such
that $\forall \sigma\in Trace~ (|\sigma|\leq n)$, where $|\sigma|$
is the length of the sequence $\sigma$,  then $T$ is called a
finite-depth transition system. If for every $s\in S$,
$\{(s,a,s')\in \rightarrow\mid a\in A,s'\in S\}$ is finite, then
$T$ is called a finite-branching transition system. By the
notation $\tau$, we refer to the \emph{silent action}.

\begin{proposition}\label{konig} If $T$ is   both a finite-depth and
a finite-branching transition system, then it is a finite
transition system.
\end{proposition}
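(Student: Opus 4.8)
The plan is to recognize this proposition as an instance of König's lemma and to prove finiteness of the state set $S$ and of the action set $A$ separately, exploiting the standing assumption that $T$ is connected (so that $reach(T)=S$ and $act(T)=A$). First I would stratify the reachable states by the length of a generalized transition that produces them. For each $k\in\mathbb{N}$, let $R_k$ denote the set of states $s$ for which there is some $\sigma\in A^\star$ with $|\sigma|\le k$ and $s_0\transs{\sigma}s$. Then $R_0=\{s_0\}$, and $R_{k+1}$ is obtained from $R_k$ by adjoining every state reachable in one more transition from a state of $R_k$.

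The core of the argument is an induction on $k$ showing that each $R_k$ is finite. The base case is immediate since $R_0$ is a singleton. For the inductive step, finite-branching guarantees that every state $s\in R_k$ has only finitely many outgoing transitions, hence only finitely many immediate successors; since $R_k$ is finite by the induction hypothesis, the set of states reachable in one additional step is a finite union of finite sets, so $R_{k+1}$ is finite.

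To finish with $S$, I would invoke the finite-depth hypothesis: there is an $n\in\mathbb{N}$ bounding the length of every element of $Trace$. Because every reachable state is witnessed by some $\sigma\in Trace$, and every such $\sigma$ satisfies $|\sigma|\le n$, we obtain $reach(T)\subseteq R_n$, while $R_n\subseteq reach(T)$ holds trivially. Connectedness then yields $S=reach(T)=R_n$, which is finite by the previous step. For the action set, note that with $S$ finite and each state finitely branching, the whole transition relation $\rightarrow$ is a finite union over $s\in S$ of finite sets, and is therefore finite; consequently $act(T)$, the set of action labels actually occurring on transitions, is finite, and connectedness gives $A=act(T)$, so $A$ is finite as well.

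I expect the only real subtlety, rather than a genuine obstacle, to lie in correctly translating the bound on trace length (a statement about sequences of actions) into a bound on reachability depth (a statement about states), and in being explicit that the finite-depth and connectedness assumptions are exactly what rule out the two ways a bounded-branching system could still be infinite: an infinitely long branch, or unreachable states and unused actions. Once the level sets $R_k$ are in place, the finiteness of $S$ is a clean finite induction, and the finiteness of $A$ follows formally.
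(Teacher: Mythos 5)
Your proof is correct. The paper itself offers no argument at all here (its proof reads only ``It is straightforward''), and your level-set induction --- $R_k$ finite for all $k$ by finite-branching, $S = reach(T) = R_n$ by finite-depth plus connectedness, then $\rightarrow$ finite and hence $A = act(T)$ finite --- is exactly the standard K\"onig-style argument the paper is implicitly appealing to, with the one detail worth keeping (which you do handle) being that finiteness of the action set $A$ also needs the connectedness assumption $act(T)=A$ and does not follow from finite-depth and finite-branching alone.
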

\begin{proof}   It is straightforward. \end{proof}

\begin{definition} Let $T=(S,A,\rightarrow,\downarrow,s_0)$ be a
transition system. Then $T$ is deterministic if the following
condition holds: Whenever $s_0\transs{\sigma}s$ and
$s_0\transs{\sigma}s'$, then $s=s'$.
\end{definition}

\begin{definition}\label{comf} 
Let $A$ be a set of actions. A
  communication function on $A$ is a partial function $\gamma:
A\times A\rightarrow A$ such that for any  $a,b\in A$:
$\gamma(\tau,a)$ is not defined, and  if $\gamma(a,b)$ is defined
then $\gamma(b,a)$ is defined and $\gamma(a,b)=\gamma(b,a)$.    
The image of $\gamma$ is shown by $C_\gamma$.
We define $H_\gamma=A-C_\gamma$. 
Assume that if $\gamma(a,b)$ is defined then both $a,b\in H_\gamma$.
\end{definition}

\begin{definition}[Parallel Composition]
Let $T=(S,A,\rightarrow,\downarrow,s_0)$ and
$T'=(S',A',\rightarrow',\downarrow',s'_0)$ be two transition
systems, and $\gamma$ a communication function on a set of actions
that includes $A\cup A'$. The parallel composition of $T$ and $T'$
under $\gamma$, written $T\parallel T'$, is the transition system
$(S'',A'',\rightarrow'',\downarrow'',s''_0)$ where \begin{itemize}
\item $S''=S\times S'$,

\item $A''=A\cup   A'\cup \{\gamma(a,a')\mid a\in A,a'\in A'\}$

\item $\rightarrow''$ is the smallest subset of $S''\times
A''\times S''$ such that:
\begin{itemize}
\item[-] if $s_1\trans{a} s_2$ and $s'\in S'$, then
$(s_1,s')\trans{a}''(s_2,s')$, \item[-] if $s'_1\trans{b} s'_2$
and $s \in S $, then $(s ,s_1')\trans{b}''(s ,s_2')$,

\item[-] if $s_1\trans{a} s_2$, $s'_1\trans{b} s'_2$, and
$\gamma(a,b)$ is defined, then
$(s_1,s'_1)\trans{\gamma(a,b)}''(s_2,s'_2)$,
\end{itemize}
\item $\downarrow''=\downarrow\times\downarrow'$,

\item $s''_0=(s_0,s'_0)$.
\end{itemize}
\end{definition}

\begin{definition}[Encapsulation]
Let $T=(S,A,\rightarrow,\downarrow,s_0)$ be a transition system.
Let $H$ be a set of actions. The encapsulation of $T$ with respect
to $H$, written as $\delta_H(T)$, is the transition system
$(S',A',\rightarrow',\downarrow',s'_0)$ where \begin{itemize}
\item $S'=S$, $A'=A$, $\downarrow'=\downarrow$, $s'_0=s_0$ and
\item $\rightarrow'=\rightarrow\cap (S\times(A-H)\times S)$.
\end{itemize}
\end{definition}

Assume $T_1$ and $T_2$ are two processes , and execute them in
parallel.
Then for~$H=A-C_\gamma$,  the encapsulation of the
process $T_1 \parallel T_2$ makes the processes to communicate.
It means the difference between $T_1 \parallel T_2$ and
$\delta_{H}(T_1\parallel T_2)$ is that in the second process only
communication actions exist.

\begin{proposition}\label{depth} If $T$ and $T'$ are two transition systems, and
$T$ is finite-depth, then $\delta_{H_\gamma}(T\parallel T')$ is
finite-depth.
\end{proposition}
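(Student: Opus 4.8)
The plan is to bound the length of every trace of $\delta_{H_\gamma}(T\parallel T')$ by the depth bound of $T$, and to obtain that bound by projecting each path down onto its first component. First I would fix $n\in\mathbb{N}$ witnessing that $T$ is finite-depth, so that every trace $\sigma$ of $T$ satisfies $|\sigma|\le n$; equivalently, every directed path out of $s_0$ in $T$ has at most $n$ edges. Then I would take an arbitrary trace of $\delta_{H_\gamma}(T\parallel T')$, realized by a path
\[
(s_0,s_0')\trans{c_1}''(s_1,s_1')\trans{c_2}''\cdots\trans{c_m}''(s_m,s_m'),
\]
and argue that $m\le n$. Since this bound is uniform in the chosen trace, finite-depth of $\delta_{H_\gamma}(T\parallel T')$ follows directly from the definition.

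The first observation is that encapsulation by $H_\gamma=A-C_\gamma$ deletes exactly those transitions whose label lies in $H_\gamma$, so every surviving label $c_k$ lies in $C_\gamma$. I would then classify each surviving step according to the three clauses in the definition of $\parallel$. A left step arises from some $s_{k-1}\trans{a}s_k$ in $T$, and a communication step from $s_{k-1}\trans{a}s_k$ in $T$ together with $s_{k-1}'\trans{b}s_k'$ in $T'$ with $c_k=\gamma(a,b)$; in both cases the first component genuinely advances along a transition of $T$. The only remaining possibility is a right step, coming from $s_{k-1}'\trans{b}s_k'$ with $s_{k-1}=s_k$ and $c_k=b\in A'$, which leaves the first component fixed.

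The key step, and the main obstacle, is therefore to rule out surviving right steps, i.e. to show that no label $b\in A'$ of a right move can lie in $C_\gamma$. This is precisely what the convention on the communication function in Definition~\ref{comf} is meant to supply: the component actions of $T$ and $T'$ are the handshake ``halves'' that may be consumed by $\gamma$ and hence sit in $H_\gamma$, whereas $C_\gamma$ consists only of completed communication labels, so $A'\cap C_\gamma=\emptyset$. (This is the formal content of the earlier remark that after encapsulation ``only communication actions exist''.) Granting this, every surviving step advances the first component, so reading off the $T$-labels $a_1,\dots,a_m$ of the left and communication steps yields a genuine path $s_0\trans{a_1}s_1\trans{a_2}\cdots\trans{a_m}s_m$ in $T$.

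Finally, the word $a_1a_2\cdots a_m$ is then a trace of $T$, whence $m\le n$ by the choice of $n$. Because the path was arbitrary, every trace of $\delta_{H_\gamma}(T\parallel T')$ has length at most $n$, which is exactly the definition of finite-depth. I expect the genuine work to be confined to the bookkeeping that justifies $A'\cap C_\gamma=\emptyset$ from the hypotheses on $\gamma$; once that is in hand, the projection-and-count argument closing the proof is routine, and notably uses the finite-depth hypothesis on $T$ alone, not on $T'$.
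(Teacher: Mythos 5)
Your overall scheme --- rule out right steps via the encapsulation, project every surviving step onto the first coordinate to get a path in $T$, and invoke the depth bound of $T$ --- is certainly the intended argument (the paper's own proof is just ``It is straightforward''), and everything in it is sound except the one step you deferred. Unfortunately, that deferred step is exactly where the proof breaks: the fact $A'\cap C_\gamma=\emptyset$ is \emph{not} derivable from Definition~\ref{comf}. The convention there says only that \emph{if} $\gamma(a,b)$ is defined \emph{then} the arguments $a,b$ lie in $H_\gamma$; this constrains actions that actually take part in some communication, and says nothing about an action of $T'$ that takes part in none, which is free to coincide with a communication \emph{result}. Concretely: let $A=\{a,w,\tau\}$ with $T\colon s_0\trans{a}s_1\trans{w}s_2$, let $A'=\{y,\tau\}$ with $T'$ a single state carrying the self-loop $s'_0\trans{y}s'_0$, and define $\gamma(a,w)=\gamma(w,a)=y$ and nothing else. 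All requirements of Definition~\ref{comf} hold ($\gamma$ is symmetric, undefined on $\tau$, and its only arguments $a,w$ lie in $H_\gamma$), and $T$ is finite-depth; yet $y\in A'\cap C_\gamma$, so the right steps $(s,s'_0)\trans{y}(s,s'_0)$ survive $\delta_{H_\gamma}$, and $\delta_{H_\gamma}(T\parallel T')$ has the trace $y^k$ for every $k$, hence is not finite-depth. So the step you flagged as bookkeeping is not merely unproved; it is unprovable from the stated hypotheses, and the proposition as literally stated fails without an extra assumption.

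What this means for your proof is that the missing fact has the status of an assumption, not a lemma: one must read the paper's informal remark following the encapsulation definition (that in $\delta_{H_\gamma}(T_1\parallel T_2)$ ``only communication actions exist'') as a standing hypothesis that the component actions are handshake halves, i.e. $act(T)\cup act(T')\subseteq H_\gamma$. Under that hypothesis your argument goes through verbatim --- indeed more is true: left steps are erased as well, every surviving transition is a communication step, and each such step advances the $T$-coordinate, giving the bound $m\le n$. So keep the projection-and-count structure, but state $act(T')\cap C_\gamma=\emptyset$ (or $act(T)\cup act(T')\subseteq H_\gamma$) as an explicit assumption rather than attributing it to Definition~\ref{comf}, which does not say it.
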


\begin{proof} It is straightforward. \end{proof}

\begin{definition} 
Assume two transition systems $T=(S,A,\rightarrow,\downarrow,s_0)$ and
$T'=(S',A',\rightarrow',\downarrow',s'_0)$, and for   $s\in S$,
let $l(s)=\{(s,a,t)\in \rightarrow\mid a\in A,t\in S\}$, and for
every $s'\in S'$, $l' (s')=\{(s',b,t')\in \rightarrow'\mid b\in
A',t'\in S'\}$. We say $T,T'$ are \emph{communication
finite-branching} with respect to communication function $\gamma$,
if for any $(s,s')\in S\times S'$ the set
$\{((s\trans{a}t),(s'\trans{b}'t))\in l(s)\times l(s')\mid
\gamma(a,b)~is~defined\}$ is finite.
\end{definition}

\begin{proposition} If two transition systems $T=(S,A,\rightarrow,\downarrow,s_0)$ and
$T'=(S',A',\rightarrow',\downarrow',s'_0)$ are \emph{communication
finite-branching} with respect to a communication function
$\gamma$, then $\delta_{H_\gamma}(T\parallel T')$ is
finite-branching.
\end{proposition}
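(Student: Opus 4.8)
The plan is to fix an arbitrary state of $\delta_{H_\gamma}(T\parallel T')$ and bound the number of its outgoing transitions, which is exactly what finite-branching requires. A state of the parallel composition is a pair $(s,s')\in S\times S'$, and encapsulation does not change the state space, so it suffices to count, for each fixed pair $(s,s')$, the transitions $(s,s')\trans{c}''(t,t')$ that survive the restriction $\rightarrow''\cap(S''\times(A''-H_\gamma)\times S'')$.

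First I would identify which of the three kinds of transitions produced by the parallel composition survive encapsulation. Recall $H_\gamma=A-C_\gamma$, so the labels that survive are exactly those lying in $C_\gamma$, the image of $\gamma$. By the standing assumption in Definition~\ref{comf} (if $\gamma(a,b)$ is defined then $a,b\in H_\gamma$), together with the reading already fixed in the text following the encapsulation definition---that $\delta_{H_\gamma}(T\parallel T')$ retains only communication actions---the interleaving transitions of type $1$ and type $2$ (those labelled by an $a\in A$ or $b\in A'$ performed by a single component) are removed, and the only outgoing transitions left from $(s,s')$ are the communication transitions $(s,s')\trans{\gamma(a,b)}''(t,t')$ arising from a simultaneous move $s\trans{a}t$ in $T$ and $s'\trans{b}'t'$ in $T'$ with $\gamma(a,b)$ defined.

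Next I would set up the counting map. Each surviving transition from $(s,s')$ is determined by a pair $\big((s\trans{a}t),(s'\trans{b}'t')\big)\in l(s)\times l'(s')$ with $\gamma(a,b)$ defined; the assignment sending this pair to the transition $(s,s')\trans{\gamma(a,b)}''(t,t')$ is onto the set of communication transitions out of $(s,s')$. Hence the number of outgoing transitions is at most the cardinality of the set $\{((s\trans{a}t),(s'\trans{b}'t'))\in l(s)\times l'(s')\mid \gamma(a,b)\text{ is defined}\}$. By the communication finite-branching hypothesis this set is finite for every $(s,s')$, so each state has finitely many outgoing transitions, and $\delta_{H_\gamma}(T\parallel T')$ is finite-branching, as required.

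The step I expect to be the main obstacle is the first one: justifying cleanly that encapsulation leaves only the communication transitions. The hypothesis on $\gamma$ guarantees that every action which can serve as an input to a communication lies in $H_\gamma$ and is therefore blocked, but to rule out stray surviving interleaving transitions one must know that no component action in $A$ or $A'$ accidentally coincides with a communication result in $C_\gamma$; this is precisely the structural convention the paper adopts when it states that the encapsulated system contains only communication actions. Once that is pinned down, the remainder is a direct application of the hypothesis, and (as with Propositions~\ref{konig} and~\ref{depth}) the argument is essentially bookkeeping.
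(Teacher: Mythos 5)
Your proof is correct and is exactly the direct counting argument the paper treats as immediate (its own proof is literally just ``It is straightforward''): after encapsulation by $H_\gamma$ only the communication transitions out of $(s,s')$ remain, and these are the image of the set of pairs that the communication finite-branching hypothesis makes finite. Your flagging of the stray-interleaving issue is also well placed --- the statement is literally false if some component label lies in $C_\gamma$ (a state of $T$ with infinitely many outgoing transitions all labelled by a single $c\in C_\gamma$ would keep infinite branching after encapsulation, while $T,T'$ remain communication finite-branching since $\gamma(c,\cdot)$ is undefined) --- so resolving it by the paper's stated convention that $\delta_{H_\gamma}(T\parallel T')$ contains only communication actions is the right move.
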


\begin{proof} It is straightforward.\end{proof}


\section{The Processes of Games}\label{model}
In this section, we introduce a \emph{process-algebraic representation} for extensive games with perfect information in order to reduce the social large games into logarithmic (or polylog) size proportional to the size of \emph{extensive representation} of the same games.

To analyse a game, we need to illustrate it with the game tree. 
If the number of agents is too large then  we may need to use a machinery to provide the game tree illustration.
We propose a machinery for this aim and call it process-game. 
The process game is a combination of process theory and game theory for solving games with large number of agents. 
We model the operation of each agent using a process-game. 
Then, we run all of these process-games in parallel to obtain the game tree.

\begin{definition}Let $A$ be a set of actions.
  The language  $L(A)$, is the smallest superset
of $A^\star$ such that
\begin{center}
	$\rho,\sigma\in L(A)\Rightarrow \neg
	\rho, (\rho\vee\sigma), mid(\sigma), pre(\sigma), pos(\sigma) \in
	L(A)$.
\end{center}
Let $T=(S,A,\rightarrow,\downarrow,s_0)$ be a transition system.
For $s\in S$, the history of $s$, denoted by $h(s)$, is the trace
$\sigma\in A^\star $ such that $s_0\transs{\sigma}s$. For a state
$(T,s)$ and a formula $\sigma\in L(A)$, we define the satisfaction
$(T,s)\models \sigma$, as follows:
\begin{itemize}
	\item[] for $\sigma\in A^\star $, $(T,s)\models \sigma$ iff
	$h(s)=\sigma$,

	\item[] $(T,s)\models pre(\sigma)$ iff $\exists\rho\in A^\star
	(h(s)=\sigma\rho)$,

	\item[] $(T,s)\models mid(\sigma)$ iff $\exists\rho,\varsigma\in
	A^\star (h(s)=\varsigma\sigma\rho)$,

	\item[]  $(T,s)\models pos(\sigma)$ iff $\exists\rho\in A^\star
	(h(s)=\rho\sigma)$,

	\item[]  $(T,s)\models (\rho\vee\sigma)$ iff $(T,s)\models
	 \rho$ or $(T,s)\models \sigma$,

	\item[]  $(T,s)\models \neg\rho$ iff $(T,s)\not\models
	 \rho$.
\end{itemize}

\end{definition}

\begin{definition}
Let $A$ and $B$  be two sets of actions. The set of
conditional actions over $A$ with conditions in $B$, denoted by
$A_{con(B)}$ is defined
  as follows:

\begin{itemize}
\item[] for each $a\in A$, and  $\sigma\in L(B)$, $[\sigma]a$ is a
conditional action, i.e., $[\sigma]a\in A_{con(B)}$.
\end{itemize}
There is an injective mapping from $A$ to $A_{con(B)}$ which maps
each $a\in A$ to $[\top]a\in A_{con(B)}$.

\end{definition}

\begin{definition}[Encapsulation of Conditional Actions]
Let $A$ and $B$ be two sets of actions and
$T=(S,A_{con(B)},\rightarrow,\downarrow,s_0)$ be a transition
system over conditional actions $A_{con(B)}$. The encapsulation of
conditional actions of $T$, written $\delta^c(T)$, is the
transition system $(S',A',\rightarrow',\downarrow',s'_0)$ where
\begin{itemize} \item $S'=S$, $A'=A$, $\downarrow'=\downarrow$,
$s'_0=s_0$ and \item for $s,t\in S'$ and $a\in A$, $s\trans{a}'t$
iff for some $\sigma\in L(B)$, $s\trans{[\sigma]a} t$ and
$(T,s)\models \sigma$.
\end{itemize}

\end{definition}
Now we give an example to illustrate the above definitions.
\begin{example}\label{HWProcess}Consider Example~\ref{huswif}. 
We may consider two processes one for the husband and one for the
wife\begin{itemize}\item[]$husband:=M+F$ (Figure \ref{HP}), and
\item[]$wife:=[M]R+[M]W+[F]R+[F]W$ (Figure \ref{WP}). \end{itemize}Finally, the
transition system of the process $\delta^c(husband\parallel wife)$
is exactly the tree of the game (Figure 2). 
Process algebraic form of the tree of the game is $M.(R+W)+F.(R+W)$.
\end{example}
\begin{figure}
        \centering
        ~ 
        \subfigure[husband process]{
                \centering
                \includegraphics[scale=1]{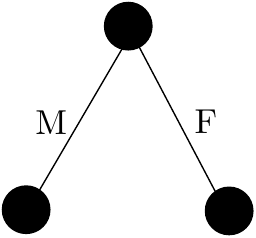}
                \label{HP}
        }
        \subfigure[wife process]{
                \centering
                \includegraphics[scale=1]{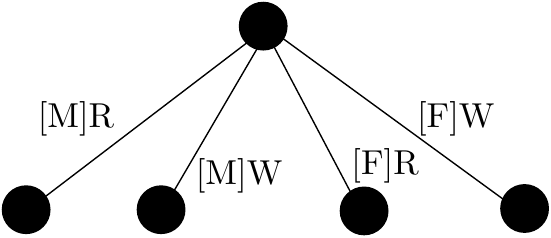}
                \label{WP}
        }
                \caption{processes of Example~\ref{HWProcess}}\label{Processes}
\end{figure}

\begin{definition}
Let $T=(S,A,\rightarrow,\downarrow,s_0)$. The transition system
$\delta_\cup(T)$ is a transition systems obtained from $T$ by
cutting some of its transitions in the following way:
\begin{center}
$s\trans{a} t$ and $b\varsubsetneq a$ then  $s\not\trans{b}$
\end{center}

\end{definition}

Assume we are given a $\Gamma=\left<N,H,P,(\Pi_i)\right>$ which is an extensive game with perfect information. 
Now we model it using process theory as a process-game by mapping each player (in $N$) to a process,
each terminal state to a member of $\downarrow$, and payoff function ($\Pi$) to profit value for each member of $\downarrow$.
\begin{definition} [Process-game] 
Let  $\Gamma=\left<N,H,P,(\Pi_i)\right>$ be an extensive game with perfect information. A process-game model for $\Gamma$ is a tuple  $\mathfrak{P}=\left< (T_i)_{i\in N}, (\pi_i)_{i\in N}\right>$ where each $T_i=(S^i,A^i_{con(B)},\rightarrow^i,\downarrow^i,s^i_0)$ is a transition systems and each $\pi_i: A^\star\rightarrow \mathbb{R}$ is a profit function.
$A_1, A_2,\ldots, A_n$ are conditioned by $B$ ($A^1_{con(B)},\ldots,A^n_{con(B)}$) so that
\begin{itemize}
\item[1.] if $i\neq j$ then $A_i\cap A_j=\emptyset$,

\item[2.] $B=(\bigcup_{i\in N}  A_i)\cup A_1\times A_2 \times \cdots \times A_n$.
\end{itemize}
and the game $\Gamma$ is mapped to process-game $\mathfrak{P}$ so that
 \begin{itemize}
 \item $S^i$ is a set of states where at each state player $i \in N$, decides to perform one of his/her possible actions which is determined by $P$ for each node on game tree,

 \item $A^i$ is a set of actions containing an internal action $\tau$ that represents possible actions of player $i$,

 \item $\rightarrow^i \subseteq S^i\times  A^i_{con(B)} \times S^i$ is a set of transitions that represents what happens when a player chooses one of his actions to do (using $P$),

 \item $\downarrow^i\subseteq S^i$ is a set of successfully terminating states that represents terminal states on game tree which can be defined by terminal histories ($O(s)$),
\item $\pi_i:A^\star \rightarrow \mathbb{R}$ is a payoff function that represents payoff ($\Pi_i$) for each action of the player $i$ in each \emph{subprocess} (like a subgame) which is started by $i$.
 \item $s_0^i\in S^i$ is set of initial states which player $i$ can choose to start his game from.
 \end{itemize}
\end{definition}
Now we can construct the process tree of the game using $\delta_\cup(\delta^c(T_1\parallel T_2\parallel...\parallel T_n))$.

The sizes of $A_i$  and $N$ (set of players or agents) in $T_i$ and $\pi_i$ are denoted by $|A_i|$, $n$, and $|\pi_i|$ respectively. Let $d=\max_i(|A_i|)$ which is called the branching factor of the process/game tree.

\begin{theorem}
Suppose a process-game $\mathfrak{P}=\left< (T_i)_{i\in N},(\pi_i)_{i\in N}\right>$ with $T_i=(S^i, A^i_{con(B)},   \rightarrow^i, \downarrow^i, s^i_0)$ is given. 
The size of the equivalent extensive representation of the $\mathfrak{P}$  which is denoted by $|ERT|$ is $O\left(d^{n+1}\right)$.
\end{theorem}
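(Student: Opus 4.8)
The plan is to read the size $|ERT|$ off the two parameters that govern the size of any game tree, namely its depth and its branching factor, and then to sum a geometric series. The extensive representation of $\mathfrak{P}$ is the transition system $\delta_\cup(\delta^c(T_1\parallel\cdots\parallel T_n))$, which by Propositions~\ref{konig} and~\ref{depth} (together with the finite-branching result established just above) is already known to be a finite tree; the task is only to make that finiteness quantitative. The crux, and the step I expect to demand the most care, is to show that this tree has depth at most $n$. The intuition is that each component $T_i$ encodes the moves of a single player, and the disjointness condition $A_i\cap A_j=\emptyset$ for $i\neq j$ lets one attribute every edge of the composite tree to a unique player; the operators $\delta^c$ and $\delta_\cup$ then prune precisely those interleavings in which a player would act out of turn, so that along any path from the empty history $\emptyset$ to a terminal state each of the $n$ players contributes at most one action. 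Hence no history in the resulting tree is longer than $n$.

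With the depth in hand, the branching bound is immediate: at a nonterminal history $h$ the mover $P(h)$ may select only among the actions in $A(h)\subseteq A_{P(h)}$, so the out-degree of every node is at most $\max_i|A_i|=d$. Consequently the number of nodes at level $k$ is at most $d^{k}$, for each $k=0,1,\dots,n$.

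Summing over all levels then yields
\[
|ERT| \le \sum_{k=0}^{n} d^{k} = \frac{d^{n+1}-1}{d-1} \le d^{n+1} \qquad (d\ge 2),
\]
which is exactly the claimed $O(d^{n+1})$ bound. I would sanity-check this against Example~\ref{HWProcess}, where $n=2$, $d=2$, and the tree $M.(R+W)+F.(R+W)$ has $1+2+4=7$ nodes, matching $\frac{2^{3}-1}{2-1}=7\le 2^{3}$. The slack between the exact count $\frac{d^{n+1}-1}{d-1}$ and the stated $d^{n+1}$ is precisely what turns the tight exponent $n$ into the advertised $n+1$; so once the depth bound is secured, the remainder is a one-line estimate and the only genuine work lies in the structural analysis of the first paragraph.
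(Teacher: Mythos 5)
Your proposal is correct and follows essentially the same route as the paper: bound the tree's depth by $n$ (one action per player, since players act sequentially), bound the out-degree of every node by $d=\max_i|A_i|$, and sum the resulting geometric series to get $O\left(d^{n+1}\right)$. The only difference is presentational — the paper simply asserts the level-by-level count $|A_1|,\ |A_1|\times|A_2|,\ \ldots$ from the phrase ``players act sequentially,'' whereas you spend more effort justifying the depth bound via $\delta^c$, $\delta_\cup$, and the disjointness of the $A_i$, and you state the geometric sum (with the $d\ge 2$ caveat) slightly more carefully.
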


\begin{proof}
The size of  the extensive representation is equal to the size of the tree of the game. 
As players act sequentially, the maximum number nodes in the first level of the tree is $|A_1|$, in the second level would be $|A_1|\times |A_2|$ and so on. 
Therefore, we have
\begin{center}
$$|ERT| \leq |A_1| + |A_1|\times |A_2| + \cdots + |A_1|\times |A_2| \times \cdots \times |A_n|$$
$$\leq d + d^2 +\cdots + d^{n} = O\left(d^{n+1}\right).$$
\end{center}
\end{proof}

\begin{theorem}\label{th: size of process-game}
The size of a given process-game $\mathfrak{P}=\left< (T_i)_{i\in N},(\pi_i)_{i\in N}\right>$ with $T_i=(S^i,A^i_{con(B)},\rightarrow^i,\downarrow^i,s^i_0)$ is 
$O(nd|B| + \sum_{i=1}^{n}|\pi_i|)$.
\end{theorem}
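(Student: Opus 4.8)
The plan is to decompose the size of $\mathfrak{P}$ into the contributions of its two kinds of components---the $n$ transition systems $(T_i)_{i\in N}$ and the $n$ profit functions $(\pi_i)_{i\in N}$---and to bound each separately. Writing the total size as
$$|\mathfrak{P}| = \sum_{i=1}^{n}|T_i| + \sum_{i=1}^{n}|\pi_i|,$$
the second sum is exactly the $\sum_{i=1}^{n}|\pi_i|$ term in the statement, so the work reduces to showing $\sum_{i=1}^{n}|T_i| = O(nd|B|)$.

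First I would bound the size of a single transition system $T_i=(S^i, A^i_{con(B)}, \rightarrow^i, \downarrow^i, s_0^i)$. Its size is dominated by the transition relation $\rightarrow^i$, each element of which is a triple $(s,[\sigma]a,t)$ carrying a conditional action $[\sigma]a\in A^i_{con(B)}$. Here the base action $a$ ranges over $A_i$, contributing at most $|A_i|\le d$ choices, while the condition $\sigma\in L(B)$ is built from the symbols of $B$. The key counting step is that the distinct conditions actually occurring in $\rightarrow^i$ are drawn from $B$ itself---as in Example~\ref{HWProcess}, where the wife's conditions are the husband's moves $M,F\in B$---so their number is $O(|B|)$. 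Pairing each of the $\le d$ base actions with each of the $O(|B|)$ conditions bounds the number of transitions by $O(d|B|)$; as $T_i$ is connected, its states number at most one more than its transitions and so contribute no more, giving $|T_i|=O(d|B|)$.

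Summing this bound over the $n$ players yields $\sum_{i=1}^{n}|T_i| = O(nd|B|)$, and adding the profit-function term $\sum_{i=1}^{n}|\pi_i|$ gives the claimed $O\!\left(nd|B| + \sum_{i=1}^{n}|\pi_i|\right)$.

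The main obstacle is the single counting step above: justifying that the conditions appearing in $\rightarrow^i$ number only $O(|B|)$, and that each is represented in $O(1)$ space relative to $B$. This has to be read off the way the process-game encodes histories: a condition is a formula in $L(B)$, and one must verify that the restriction to conditions that are (essentially) single elements of $B$ keeps both their count and their per-transition representation cost inside the $|B|$ factor. Were conditions permitted to be arbitrarily large $L(B)$-formulas, a complex condition could inflate a transition by an extra multiplicative length term and the clean $O(d|B|)$-per-player bound would break; so the argument hinges on this structural restriction.
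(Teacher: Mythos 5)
Your proof is correct and takes essentially the same route as the paper's: the identical decomposition $|\mathfrak{P}| = \sum_{i=1}^{n}|T_i| + \sum_{i=1}^{n}|\pi_i|$, with each $|T_i|$ bounded by (number of base actions) $\times$ (number of conditions) $\le d|B|$, which is exactly the paper's chain $\sum_{i=1}^{n}|T_i| \leq \sum_{i=1}^{n}|A^i_{con(B)}| \leq \sum_{i=1}^{n}|A_i|\times|B| \leq nd|B|$. Your explicit caveat that the conditions occurring in $\rightarrow^i$ must be (essentially) single elements of $B$ rather than arbitrary $L(A)$-style formulas is a point the paper leaves implicit in the inequality $|A^i_{con(B)}| \le |A_i|\times|B|$ and only acknowledges in the remark following the theorem, so flagging it is a mild improvement rather than a different argument.
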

\begin{proof}
The size of $\mathfrak{P}$ is equal to the sum of $|T_i|$ and $|\pi_i|$ for all players.
The size of $T_i$ is $O(|A^i_{con(B)}|)$. Therefore,
\begin{equation}\label{eq:tSize}
\sum_{i=1}^{n}|T_i| \leq \sum_{i=1}^{n}|A^i_{con(B)}| \leq \sum_{i=1}^{n}|A_i|\times|B| \leq nd|B|,
\end{equation}
\begin{equation}\label{eq:process-game size}
|\mathfrak{P}| = \sum_{i=1}^{n}|T_i| + \sum_{i=1}^{n}|\pi_i|.
\end{equation}
\noindent
We can conclude from the equations~\ref{eq:tSize} and~\ref{eq:process-game size} that
\begin{center}
$|\mathfrak{P}| = O(nd|B|+ \sum_{i=1}^{n}|\pi_i|)$.
\end{center}

\end{proof}

Assume that each action of players are in the form of $a$ or $[b]a$ (without condition or just with one condition)
, therefore	 the size of $B$ for each player is $O(d)$.
According to Theorem~\ref{th: size of process-game}, it is easy to see that the size of the process-game representation would be  $O(nd^2+ \sum_{i=1}^{n}|\pi_i|)$.
Based on the assumption, the size of the process-game can be logarithmic proportional to the size of the extensive representation (as $d$ is the maximum number of plsyers' actions, it is a constant). 
The following equation shows this fact.
\[
|\mathfrak{P}| = O(nd^2+ \sum_{i=1}^{n}|\pi_i|) = O(\log(d^n)\frac{d^2}{\log(d)}+ \sum_{i=1}^{n}|\pi_i|) = O(\log(|ERT|) + \sum_{i=1}^{n}|\pi_i|)\text{.}
\]


\subsection*{Extended Version}
We can extend the definition of process-game for extensive games with imperfect information in the following manner.

\begin{notation} Let $A_1$ and $A_2$ be two disjoint sets. For $(a,b)\in (A_1\cup 2^{A_1}) \times (A_2\cup 2^{A_2})$, we
define $a\dot{\cup}b:= ~$

\begin{itemize}
	\item[] $\{a\}\cup\{b\}$ ~~~if~~$(a,b)\in A_1 \times A_2$,
	\item[] $\{a\}\cup b$ ~~if ~$(a,b)\in A_1 \times 2^{A_2}$,
	\item[] $\{b\}\cup a$ ~~if ~$(a,b)\in 2^{A_1} \times A_2$,
	\item[] $a\cup b$ ~~if ~$(a,b)\in 2^{A_1} \times 2^{A_2}$.
\end{itemize}
\end{notation}
Using the above notation, we modify definition \ref{comf} over $n$ disjoint sets of actions in the following.
\begin{definition}\label{ecomf}
Let $A_1, A_2, \ldots, A_n$ be $n$ disjoint sets of actions. $\gamma: (A_1\cup 2^{A_1}) \times (A_2\cup 2^{A_2}) \times \cdots \times (A_n\cup 2^{A_n}) \rightarrow 2^{A_1\cup A_2\cup \cdots \cup A_n}$ is a communication function which is defined $\gamma(a_1,a_2,\ldots, a_n):=a_1\dot{\cup}a_2\dot{\cup}\cdots\dot{\cup}a_n$ for all $(a_1,a_2,\ldots, a_n)\in (A_1\cup 2^{A_1}) \times (A_2\cup 2^{A_2}) \times \cdots \times (A_n\cup 2^{A_n})$.
\end{definition}
We may extend $\gamma$ to the set of conditional actions in the manner as in the following definition:
\begin{definition}\label{ccomf} Let $A_1,A_2,B$ be three disjoint sets of actions. $\gamma$ is a communication function over conditional actions $A^1_{con(B)}$ and $A^2_{con(B)}$ for $([\sigma]a,[\sigma]b)\in A^1_{con(B)} \times A^2_{con(B)}$ which is defined as follows:
\begin{center}
$\gamma([\sigma]a,[\rho]b):=[\sigma\wedge\rho]\gamma(a,b)$.
\end{center}
\end{definition}
 This communication function helps to model simultaneous players' moves in extensive games with imperfect information
  (See Example~\ref{varBoS}).
\begin{example} The process of each player in Example~\ref{varBoS} is as given below:
\begin{itemize}
\item[] $wife:= (H+N).(R+W)$,
\item[] $husband:=[H]M+[H]F$.
\end{itemize}
and $\gamma$ is defined over $A_w=\{R,W\}$, $A^h_{con(B)}=\{[H]M,[H]F\}$, and $B=\{H\}$. $\gamma([H]M,R)$, for instance, is equal to $[H]\{M,R\}$.
The transition system of the process over $\gamma$ communication function is $\delta^c(wife\parallel husband)$
which is exactly the tree of the game. The tree is shown in Figure~\ref{SHWinP} which is equivalent to Figure~\ref{SHW}.
\end{example}
\begin{figure}[ht]
\centerline{\includegraphics{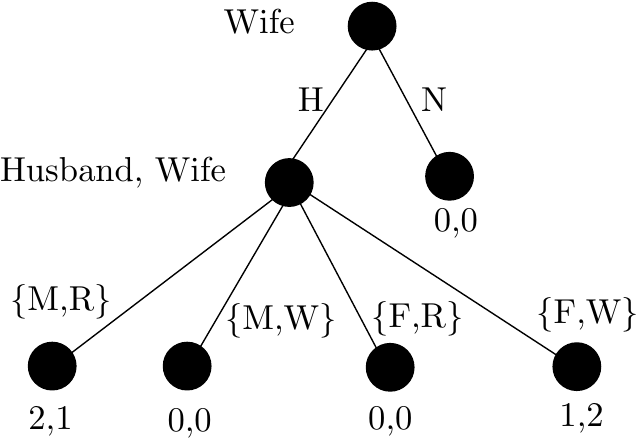}}
\caption{\label{SHWinP}\small This figure shows simultaneous moving of wife and husband in the second level of the game and equivalent to Figure~\ref{SHW}. }
\end{figure}

Now, we can define the \emph{extended process-game} by a tuple $\mathfrak{P}=\left< (T_i)_{i\in N},(\pi_i)_{i\in N}, \gamma \right>$ for a given extensive game with imperfect information $\Gamma = \left<N,H,P, (\mathcal{I}_i), (\Pi_i)\right>$.
Relative to the definition of a process-game, the new element is the $\gamma$ as a communication function over $A_1, A_2,\ldots, A_n$ which are conditioned by $B$. 
Definition of $\gamma$ is based on $\mathcal{I}_i$.

In the extended version, the process/game tree is constructed with the same method as in  process-game ($\delta_\cup(\delta^c(T_1\parallel T_2\parallel...\parallel T_n))$).
Also, if the size of $\gamma$ is denoted by $|\gamma|$, we have 
\begin{equation}
|\mathfrak{P}| = \sum_{i=1}^{n}|T_i| + \sum_{i=1}^{n}|\pi_i| + |\gamma|. \label{eq: extVSize}
\end{equation}
Similar theorems for extended process-game can be proved by replacing equation~\ref{eq: extVSize} in those theorems.


\subsection*{Applications}
One of the application of process-game in representing the real social extensive games in log or polylog size could be in modeling of social systems 
where agents have \emph{local interaction} and \emph{local competition} (these two terms might be used interchangeably,  conform to this article).
A significant approach which works on the assumption of local interaction is generative social science.

Generative social science~\cite{JE06} is an approach to study social systems via agent-based computational models.
Its aim is to answer that how the regularity of the society emerges from local interaction of heterogeneous autonomous agents.
One of the main assumptions of generative social science is called ``local interaction'' (see~\cite{JE06}, Page 6).
Base on this assumption, agents interact with their neighbors and compete in the social network, in which they are involved.

On the other hand, social extensive games--each player is viewed as being involved in a local competition with the players in geographically neighboring regions--can be modeled as a graph $G$~\cite{KLS}. 
In the graph-based model, each player is represented by a vertex in $G$. The interpretation of edges is that each player is in a game only with the respective neighbors in $G$.

In the above ven, it can be proposed that lots of social systems are based on local interactions and competitions.
The observation forwarded below stipulates that in considering local interaction in the form of extensive games, the size of the process-game can be logarithmic in comparison with the  extensive representation. 

\begin{observation}
Suppose there is an extensive game with perfect information and interaction graph-based model $G$.
The maximum degree of  $G$ is denoted by $\Delta$. 
Actions of each player are limited just to the player's neighbors in $G$ (the size of $B$ in the equivalent process-game would be $O(d^{\Delta})$). 
Hence, the size of the process-game representation would be  $O(nd^{\Delta+1}+ \sum_{i=1}^{n}|\pi_i|)$.
As a consequence, we have
$$|\mathfrak{P}| = O(nd^{\Delta + 1}+ \sum_{i=1}^{n}|\pi_i|) = O(\log(d^n)\frac{d^{\Delta +1}}{\log(d)}+ \sum_{i=1}^{n}|\pi_i|) = $$
$$ O(\frac{d^{\Delta +1}}{\log(d)}\log(|ERT|) + \sum_{i=1}^{n}|\pi_i|)\text{.}$$
As $d,\Delta\ll n$  is a common situation in social extensive games, the size of the process-game can be logarithmic (or polylog, depending on $d$  and $\Delta$) with respect to the size of the extensive representation. 
For instance, considering local competition property, we have  $\Delta =  O(\log(n))$, then $\frac{d^{\Delta +1}}{\log(d)} = O(n)=O(\log(|ERT|))$. Therefore, $|\mathfrak{P}| = O(\log^2(|ERT|) + \sum_{i=1}^{n}|\pi_i|)$.
\end{observation}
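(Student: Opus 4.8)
The plan is to obtain the Observation as a direct specialization of Theorem~\ref{th: size of process-game}, whose conclusion $|\mathfrak{P}| = O(nd|B| + \sum_{i=1}^{n}|\pi_i|)$ already isolates $|B|$ as the only quantity still depending on the interaction structure. The entire Observation therefore reduces to bounding $|B|$ under the local-interaction hypothesis and then rewriting the resulting expression in terms of $|ERT|$.

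First I would make the local-interaction assumption precise at the level of the conditional-action language. In the unrestricted process-game the conditioning set is $B = (\bigcup_{i\in N}A_i) \cup A_1\times A_2\times\cdots\times A_n$, and it is the product term that forces $|B|$ up to $d^{n}$: in principle a player may condition an action on the full profile of every other player. When each player $i$ interacts only with its at most $\Delta$ neighbors in $G$, every condition occurring in $i$'s conditional actions references only those neighbors' moves. Since each neighbor contributes at most $d=\max_i|A_i|$ actions, the number of distinct condition-tuples relevant to any single player is at most $d^{\Delta}$, so the effective conditioning set satisfies $|B| = O(d^{\Delta})$. Substituting this into Theorem~\ref{th: size of process-game} gives at once $|\mathfrak{P}| = O(nd\cdot d^{\Delta} + \sum_{i=1}^{n}|\pi_i|) = O(nd^{\Delta+1} + \sum_{i=1}^{n}|\pi_i|)$.

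Next I would translate the factor $n$ into $\log|ERT|$. The extensive-representation size bound established above gives $|ERT| = O(d^{n+1})$, while a full game tree attains $|ERT| = \Theta(d^{n})$, so $\log|ERT| = \Theta(n\log d)$, i.e. $n\log d = \log(d^{n}) = \Theta(\log|ERT|)$. Factoring the leading term as
\[
nd^{\Delta+1} = \frac{d^{\Delta+1}}{\log d}\,(n\log d) = \frac{d^{\Delta+1}}{\log d}\,\log(d^{n}) = \frac{d^{\Delta+1}}{\log d}\,\log(|ERT|)
\]
reproduces the displayed chain of equalities. Finally, in the local-competition regime one assumes $\Delta = O(\log_d n)$ (which is $O(\log n)$ for constant $d$); then $d^{\Delta} = O(n)$, hence $\frac{d^{\Delta+1}}{\log d} = O(n) = O(\log|ERT|)$, and the bound collapses to $|\mathfrak{P}| = O(\log^{2}|ERT| + \sum_{i=1}^{n}|\pi_i|)$.

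I expect the substantive step to be the bound $|B| = O(d^{\Delta})$, since that is where the qualitative ``local interaction'' hypothesis must be turned into a quantitative count; one has to argue carefully that the product $A_1\times\cdots\times A_n$ in the definition of $B$ may legitimately be replaced by a product over neighbors only. A secondary subtlety lies in the final specialization: because $d^{\Delta}$ is exponential in $\Delta$, the constant hidden in $\Delta = O(\log n)$ genuinely matters, and to obtain $d^{\Delta} = O(n)$ one really needs $\Delta \le \log_d n + O(1)$ rather than merely $\Delta = O(\log n)$ with an arbitrary constant; I would therefore state the hypothesis as $\Delta = O(\log_d n)$ to keep the estimate honest.
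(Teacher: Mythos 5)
Your proposal follows essentially the same route as the paper: the Observation is stated there with its justification embedded rather than as a separate proof, and that justification is exactly your specialization of Theorem~\ref{th: size of process-game} with $|B| = O(d^{\Delta})$, followed by the rewriting $n\log d = \log(d^{n}) = \Theta(\log(|ERT|))$ and the final $\Delta = O(\log(n))$ case. Your two added caveats---that identifying $\log(d^{n})$ with $\log(|ERT|)$ needs the extensive representation to be genuinely exponential in $n$, and that $d^{\Delta} = O(n)$ really requires $\Delta \le \log_{d}(n) + O(1)$ rather than $\Delta = O(\log(n))$ with an arbitrary constant---are legitimate refinements of points the paper passes over silently (hedging only with its parenthetical ``or polylog'').
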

\noindent
Note that  local competition is kind of local interaction.
Recalling the assumption of local interaction in generative social science, we can come to the conclusion that 
it is obvious that in lots of social systems $\Delta=  O(\log(n))$.

There have been various representations of extensive games aiming to represent the large games compactly, such as the graphical model~\cite{KLS} and MAIDs \cite{KLM03}.
However, our proposal (originating from the process theory) of producing a compact representation, is quite different.
Also, we bring the advantages of process theory (in execution and management) to the game theory environment.

Our model can be applied in management of complex systems too, but how? Each process has its own profit and each profit is a function of some inputs. 
Suppose a process game is defined with some initial values and may deliver some equilibria path as a result (using the algorithm of the following section). 
However, for a manager, it would be critical to control the path as he wants it to be. 
Therefore he can try changing the inputs (so the profits will be changed, leading also to some new equilibria path) up to getting the best one.


\section{The Algorithm}

The notion of process-game has been explained completely in the previous section. 
As strategies and agents in the process-game are the same as in its equivalent extensive game, therefore the subgame perfect equilibrium (or equilibrium path) is a solution concept for the process-game, too.
Now, in this section, we propose an algorithm to find the equilibrium path in a process-game.
Algorithm~\ref{preAlgorithm} illustrates the finding of equilibria path under the  assumption that there exists only one equilibrium path for each \emph{subprocess} (like a subgame).

\begin{remark} Equilibrium path in a process-game $\mathfrak{P}=\langle (T_i)_{i\in N},(\pi_i)_{i\in N}\rangle$ is a sequence of actions from $s_0^1$ (suppose player $1$ starts the entirely of the game) to $\downarrow^n$ (suppose the last action is for player $n$).
\end{remark}

\renewcommand{\algorithmicrequire}{\textbf{Input:}}
\renewcommand{\algorithmicensure}{\textbf{Output:}}
\renewcommand{\algorithmiccomment}[1]{$\vartriangleright$ #1}
\begin{algorithm}[ht]
\caption{Depth-First Finding Equilibria \label{preAlgorithm}}
\begin{algorithmic}[1]
\REQUIRE A Process-Game $\mathfrak{P}=\langle (T_i)_{i\in N},(\pi_i)_{i\in N}\rangle$
\ENSURE Equilibria path $\Lambda$
\FOR{ $s_{j}^i \leftarrow$ each state visited in depth-first expansion of $\mathfrak{P}$ from $s_{0}^1$ using $A^i_{con(B)}$}\label{expansion}
\STATE $isVisited[s_j^i]\leftarrow false$
\STATE $ratPath[s_j^i]\leftarrow\varnothing$
\IF{ $s_{j}^i \in \downarrow^{n} $ }
\STATE $isVisited[s_j^i]\leftarrow true$
\ELSE
\IF{$\forall s',a: (s_j^i,a,s') \in \rightarrow^{i}  \wedge ~isVisited[s']$  }

\STATE $a\leftarrow\arg\max_{a}\pi_i(a.ratPath[s']) $ \COMMENT $s_j^i \trans{a} s' \in \rightarrow^i$
\\
\COMMENT choose a possible action $a\in A_{con(B)}^i$ from state $s_j^i \in S^i$ which maximizes profit \\ \hspace{3mm} of player $i$ in the state $s_j^i$

\STATE $ratPath[s_j^i] \leftarrow a.ratPath[s']$
\STATE $isVisited[s_j^i] \leftarrow true$
\\
\COMMENT the assumption is just one equilibria path exists for each \emph{subprocess}
\STATE delete the $ratPath$ for all $s'$.\label{deletion}
\\
\COMMENT by deleting $ratPath$ for $s'$ the space complexity will remain linear and $ratPath$ \\ \hspace{3mm} propagated toward $s_0^1$ states.
\ENDIF
\ENDIF
\ENDFOR
\RETURN $ratPath[s_0^1]$
\end{algorithmic}
\end{algorithm}

In line \ref{expansion} of Algorithm~\ref{preAlgorithm}, expansion takes place by virtue of conditional actions. 
 As in Algorithm~\ref{preAlgorithm}, each player's payoff value is calculated bottom-up, it is sufficient to save players' payoff in the subprocess equilibria at each level. 
To reuse space and keep the space required by the algorithm linear, we delete all process nodes which are expanded in that subprocess previously in line \ref{deletion} of Algorithm~\ref{preAlgorithm}. 
We present below an example, meant to clarify  how this algorithm works.

\begin{example}\label{algorithm steps}
Two people select a policy that affects them both by alternately vetoing the available (voted) policies until only one remains. 
First person 1 vetoes a policy. 
If more than one policy remains, person 2 then vetoes a policy. 
If more than one policy still remains, person 1 then vetoes another policy. 
The process continues until only one policy has not been vetoed. 
Suppose there are three possible policies, $X$, $Y$, and $Z$, person 1 prefers $X$ to $Y$ to $Z$, and person 2 prefers $Z$ to $Y$ to $X$~\cite{O04}.
Now, we want to represent this situation through the process-game, as a compact representation proportional to the extensive model. 
To define a process-game $\mathfrak{P}=\langle (T_i)_{i\in \{1,2\}},(\pi_i)_{i\in \{1,2\}}, \gamma \rangle$, 
we should first determine the transition system $T_i$.
Players' process model is specified below, with $P1$ and $P2$ being the processes of person 1 and person 2, respectively.
$T_1$ and $T_2$  can be obtained by decoding theses process models.
\begin{itemize}
\item[] $P1 := X + Y + Z$
\item[] $P2 := [X]Y+[X]Z+[Y]X+[Y]Z+[Z]X+[Z]Y$
\end{itemize}

\noindent
Definitions of $\pi_1$ and $\pi_2$  are based on the players' preferences.
The payoff function of each player on each subprocess is equal to the player's payoff  over the complete path from the root to the leaf, passing through the subprocess.
This path for each subprocess $p$ is denoted by  $path(p)$. Therefore,
$$\pi_1 (p) = \left\{ 
\begin{array}{ll}
2 & \text{ if there is no } X \text{ in the } path(p) \\
1 & \text{ if there is no } Y \text{ in the } path(p) \\
0 & \text{ if there is no } Z \text{ in the } path(p)
\end{array}
\right.
,$$
$$ \pi_2 (p) = \left\{ 
\begin{array}{ll}
2 & \text{ if there is no } Z \text{ in the } path(p) \\
1 & \text{ if there is no } Y \text{ in the } path(p) \\
0 & \text{ if there is no } X \text{ in the } path(p)
\end{array}
\right..
$$

\noindent
Actually, we define the above functions compactly. 
Thus, their space complexity is much lower than when defining the function for each path separately.
Now, let us compute the Nash equilibrium. 
We know that the process tree is constructed completely by $\delta_\cup(\delta^c(P1\parallel P2))$.
However, using Algorithm~\ref{preAlgorithm}, the process tree is expanded step by step to save the space.  
The steps of the DFS expansion of the process tree are sketched in Figure~\ref{FNIPG}.

\begin{figure}[ht]
\centerline{\includegraphics[width=11cm]{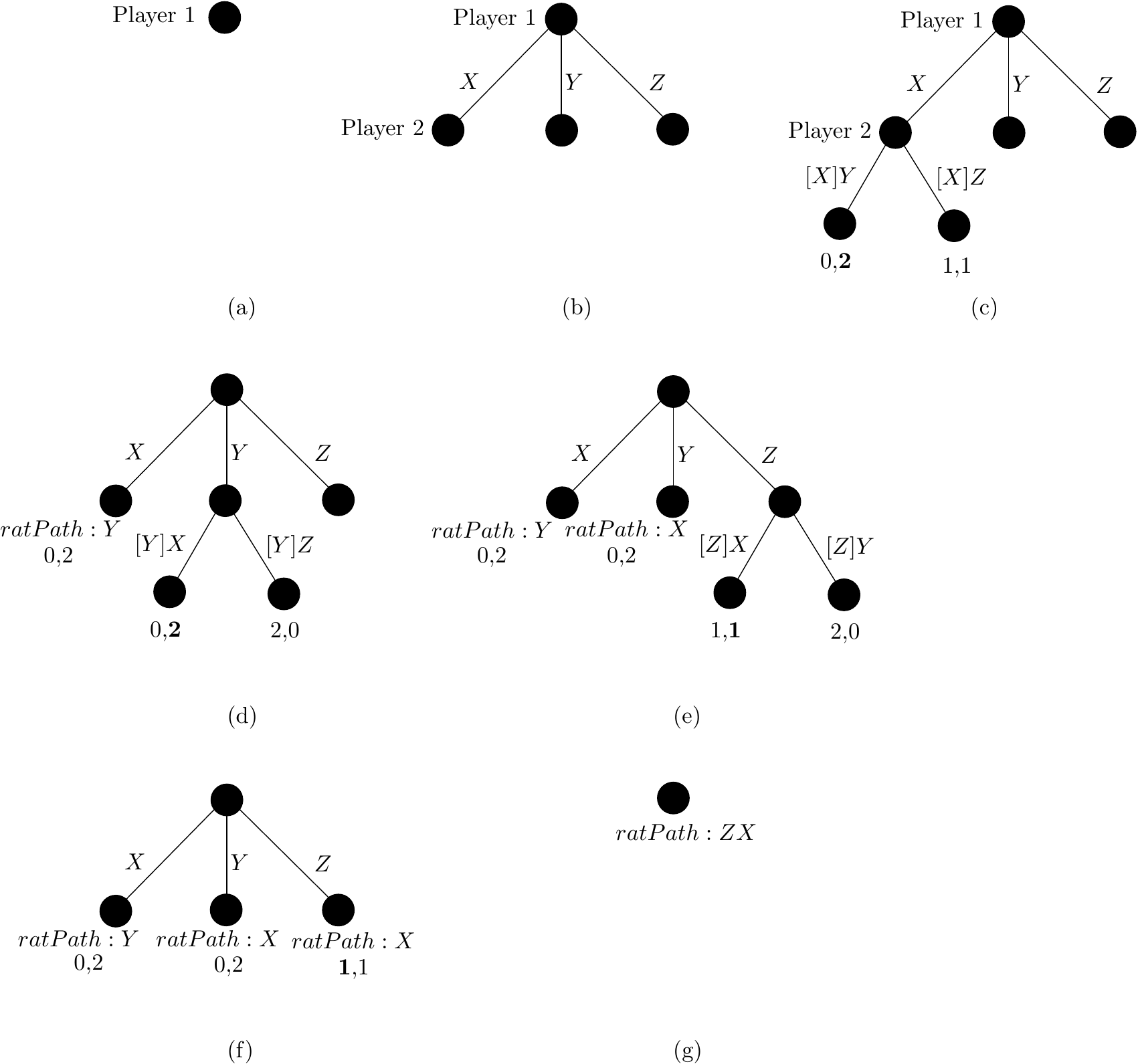}}
\caption{\label{FNIPG}\small  (a) to (g) show the steps of Algorithm~\ref{preAlgorithm} in finding the Nash equilibrium for the process model of Example~\ref{algorithm steps}. 
At step (g) the rational path of the $s^1_0$ ($ZX$) is the Nash equilibrium.}
\end{figure}

\end{example}

 As it is not necessary existing any pure equilibrium path for an extensive game with imperfect information, this algorithm is not working for the extended process-game. However, we can detect the situation in the bottom-up calculation and report ``no pure equilibrium path".


\subsection*{Complexity}
Time complexity of Algorithm \ref{preAlgorithm} in worst case would be in the NP-complete complexity class, like for backward induction~\cite{NRTV07} (because we want to find pure Nash equilibria). 
Its space complexity is linear in the size of the game which is given as an input i.e., linear in the depth of the process-game, maximum number of actions which are possible to do by a player, and the size of the payoff function.

In the extended version, if the number of simultaneous moves grows, the extensive game will be transforming to the pure strategic form, so that in the worst case, its space complexity would be exponential.
\\
The algorithm is like a depth-first search and the space complexity of depth-first search is $O(hd)$, 
here $h$ is the height of the tree and $d$ is the branching factor that is the maximum number of actions which a player can do. 
However, there is a bottleneck which is caused by the size of the payoff function. 
If it has exponential size, as space complexity is linear with respect to the size of the payoff function too, it will be exponential too. 
Therefore, the algorithm will be better than backward induction or using strategic form algorithms in terms of space complexity under two circumstances. 
First, the size of the payoff function is polynomial on $n$ and $d$. 
Second, the size of simultaneous moves (required size to represent function of $\gamma$) is polynomial in $n$ and $d$ (in the case of extended version).

\begin{theorem}
For a given extended process-game $\mathfrak{P}=\langle (T_i)_{i\in N},(\pi_i)_{i\in N}, \gamma \rangle$  as an input, the space complexity of Algorithm~\ref{preAlgorithm} is $O(nd + |\mathfrak{P}|)$.
\end{theorem}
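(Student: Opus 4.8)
The plan is to split the memory consumed during one run of Algorithm~\ref{preAlgorithm} into two disjoint parts and bound each separately: the storage that merely holds the input $\mathfrak{P}$, and the auxiliary (working) memory allocated while the depth-first expansion runs. The first part is by definition $|\mathfrak{P}| = \sum_{i=1}^{n}|T_i| + \sum_{i=1}^{n}|\pi_i| + |\gamma|$ (equation~\ref{eq: extVSize}); in particular the profit functions $\pi_i$ and the communication function $\gamma$ are read-only tables, so the evaluations $\arg\max_a \pi_i(a.ratPath[s'])$ need no space beyond what is already counted in $\sum_i|\pi_i|$ and $|\gamma|$. Hence the whole task reduces to showing that the working memory is $O(nd)$.

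First I would pin down the shape of the process tree $\delta_\cup(\delta^c(T_1\parallel\cdots\parallel T_n))$. Since the players act sequentially, each contributing one level from $s_0^1$ at the root to $\downarrow^n$ at the leaves (per the Remark preceding the algorithm), the height is at most $n$, and every node has at most $d=\max_i|A_i|$ outgoing conditional transitions. Algorithm~\ref{preAlgorithm} is then a post-order depth-first traversal: a state $s_j^i$ is assigned its $ratPath$ only after \emph{all} its children are marked \emph{visited}, and line~\ref{deletion} immediately discards $ratPath$ for each child $s'$. The heart of the argument is a loop invariant stating that \emph{at every moment of the traversal, the only states carrying a live $ratPath$ and a live $isVisited$ flag are the already-processed children of the states lying on the current active branch from $s_0^1$ to the node under expansion}.

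Granting this invariant, the standard depth-first accounting finishes the job: the active branch has length at most $n$, and at each of its nodes we retain bookkeeping ($isVisited$, the candidate action, and the reference to $ratPath[s']$) for at most $d$ processed children, reproducing the $O(hd)=O(nd)$ figure quoted just before the theorem. Adding the input term yields $O(nd+|\mathfrak{P}|)$. I would also remark here that when $\gamma$ forces many simultaneous moves the tree degenerates toward the strategic form and $|\gamma|$ grows; because $|\gamma|$ sits inside $|\mathfrak{P}|$, the announced exponential worst case is absorbed into that term rather than contradicting the bound.

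The step I expect to be the main obstacle is proving the deletion invariant together with the accounting for the \emph{length} of each stored rational path, so that live records stay on the frontier (size $O(nd)$) instead of accumulating over the whole explored subtree (size $O(d^{\,n})$). Concretely, one must verify that each $ratPath[s']$ is consumed exactly once, when its unique parent on the active branch evaluates its $\arg\max$, and that implementing $a.ratPath[s']$ by suffix-sharing (storing one decided action per node and reusing the chosen child's suffix on ownership transfer in line~\ref{deletion}) keeps the total live path storage linear rather than quadratic in $n$. Once this confinement to the frontier is established rigorously, the $O(nd)$ bound on working memory—and hence the stated $O(nd+|\mathfrak{P}|)$—follows.
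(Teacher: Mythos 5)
Your proposal is correct and takes essentially the same route as the paper's proof: split the space into the read-only input $|\mathfrak{P}|$ plus the DFS working memory, and bound the latter by $O(nd)$ because the deletion in line~\ref{deletion} confines live $ratPath$/$isVisited$ records to the frontier of the traversal, i.e.\ to the children of nodes on the active branch, whose length is at most $n$ with branching at most $d$. If anything you are more careful than the paper, whose entire proof is this frontier count (``at all levels of the tree, at most one node is expanded''): it never addresses the obstacle you flag as the main one, namely that stored $ratPath$ suffixes have length up to $n$ and under naive accounting accumulate to more than $O(nd)$, so your suffix-sharing/ownership-transfer remark fills a step the published argument silently glosses over.
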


\begin{proof}
At each step of Algorithm~\ref{preAlgorithm}, one process node is expanded and finally collapsed when all its subnodes are visited.
Therefore, when Algorithm~\ref{preAlgorithm} is running, at all levels of the tree, at most one node is expanded. 
We know that the height of the tree is at most $n$ and the branching factor of the tree is $d$.
Therefore, the allocated space for expanding the process tree during the running in all steps would be $O(nd)$ in the worst case. 
Hence, the space complexity of the algorithm is $O(nd + \text{size of input}) = O(nd + |\mathfrak{P}|)$.
\end{proof}

Actually, the process-game is a simple case of the extended process-game with $|\gamma| = 0$.
Hence, the above theorem is in a general case true for the process-game, too.


\section{Conclusions and Future Works}
We introduced a new model to represent large extensive games  in  a compact representation (specially social extensive games with local competition).
In addition, the model is defined in algebraic terms and can be ran in parallel mode.
Further, we provide an algorithm to find the Nash equilibrium for  this representation in linear space complexity, with respect to the size of the input.
\\
As we mentioned, one of the applications of the model can be in management of complex systems. 
However, there is no software to facilitate the management process for	 the manager.
Therefore, the next step of the work may be develop a software to approach this particular goal.


\mysubsubsection{Acknowledgment}
We would like to thank anonymous referees for their helpful comments and suggestions which resulted in a better representation of our paper.

\end{document}